\newcommand{\bigo}{\mathcal{O}}
\newcommand{\eps}{\varepsilon}
\begin{document}

\title{Approximating Text-to-Pattern Distance via Dimensionality Reduction}
\author{Przemys\l{}aw~Uzna\'nski}{Institute of Computer Science, University of Wrocław, Poland}{puznanski@cs.uni.wroc.pl}{https://orcid.org/0000-0002-8652-0490}{}

\authorrunning{P. Uznański}
\Copyright{Przemys\l{}aw~Uznański}
\hideLIPIcs
\nolinenumbers

\ccsdesc[500]{Theory of computation~Sketching and sampling}
\ccsdesc[500]{Theory of computation~Approximation algorithms analysis}

\keywords{Approximate Pattern Matching, $\ell_2$ Distance, $\ell_1$ Distance, Hamming Distance, Approximation Algorithms, Combinatorial Algorithms}
\funding{Supported by Polish National Science Centre grant 2019/33/B/ST6/00298.}

\maketitle
\begin{abstract}
Text-to-pattern distance is a fundamental problem in string matching, where given a pattern of length $m$ and a text of length $n$, over an integer alphabet, we are asked to compute the distance between pattern and the text at every location. The distance function can be e.g. Hamming distance or $\ell_p$ distance for some parameter $p > 0$. Almost all state-of-the-art exact and approximate algorithms developed in the past $\sim 40$ years were using FFT as a black-box. In this work we present $\widetilde\bigo(n/\varepsilon^2)$ time algorithms for $(1\pm\varepsilon)$-approximation of $\ell_2$ distances, and $\widetilde\bigo(n/\varepsilon^3)$ algorithm for approximation of Hamming and $\ell_1$ distances, all without use of FFT. This is independent to the very recent development by Chan et~al. [STOC 2020], where $\bigo(n/\varepsilon^2)$ algorithm for Hamming distances not using FFT was presented -- although their algorithm is much more ``combinatorial'', our techniques apply to other norms than Hamming.
\end{abstract}
\newpage

\section{Introduction}
Text-to-pattern distance is a generalization of a classical pattern matching by incorporating the notion of  similarity (or dissimilarity) between pattern and locations of text. The problem is defined in a following way: for a particular distance function between words (interpreted as vectors), given a pattern of length $m$ and a text of length $n$, we are asked to output distance between the pattern and every $m$-substring of the text.
Taking e.g. distance to be Hamming distance, we are essentially outputting number of mismatches in a classical pattern matching question (that is, not only detecting exact matches, but also counting how far pattern is to from being located in a text, at every position). Such a formulation, for a constant-size alphabet, was first considered  by Fischer and Paterson in \cite{FP:1974}. The algorithm of \cite{FP:1974} uses $\bigo(n \log n)$ time and in substance computes the Boolean convolution of two vectors a constant number of times. This was later extended to $\textrm{poly}(n)$ size alphabets by Abrahamson in~\cite{Abrahamson87,K:1987} with $\bigo(n \sqrt{m \log m})$ run-time.

The lack of progress in Hamming text-to-pattern distance complexity sparked interest in searching for relaxations of the problem, with a hope for reaching linear (or almost linear) run-time. There are essentially two takes on this. First consists of approximation algorithms. Until very recently, the fastest known $(1\pm\eps)$-approximation algorithm for computing the Hamming distances was by Karloff \cite{Karloff93}. The algorithm uses random projections from an arbitrary alphabet to the binary one and Boolean convolution to solve the problem in $\bigo(\eps^{-2} n \log^3 n)$ time. Later Kopelowitz and Porat \cite{KP:15} gave a new approximation algorithm  improving the time complexity to $\bigo (\eps^{-1} n\log^3{n}\log{\eps^{-1}} )$, which was later significantly simplified in Kopelowitz and Porat \cite{KopelowitzP18}, with alternative formulation by Uznański and Studen{\'y} \cite{cpm19a}.

Second widely considered way of relaxing exact text-to-pattern distance is to report exactly only the values not exceeding certain threshold value $k$, the so-called $k$-mismatch problem. %The motivation for this comes from interpretation of exact text-to-pattern Hamming distance as simply counting \emph{mismatches} in exact pattern matching, and then $k$-approximated Hamming distance becomes reporting only alignments where there are at most $k$ mismatches. 
The very first solution to the $k$-mismatch problem was shown by Landau and Vishkin in \cite{LandauV86} working in $\bigo(nk)$ time, using essentially a very combinatorial approach of taking $\bigo(1)$ time per mismatch per alignment using LCP queries. This initiated a series of improvements to the complexity, with algorithms of complexity $\bigo(n \sqrt{k \log k})$ and $\bigo((k^3 \log k + m)\cdot n/m)$ by Amir et~al. in~\cite{AmirLP04}, later improved to $\bigo((k^2 \log k + m\ \text{poly} \log m)\cdot n/m)$ by Clifford et al. \cite{k-mismatch} and finally $\bigo( (m \log^2 m \log |\Sigma| + k \sqrt{m \log m})\cdot n/m)$ by Gawrychowski and Uznański \cite{GU18} (and following poly-log improvements by Chan et~al. in \cite{approxk}).

Moving beyond counting mismatches, we consider $\ell_1$ distances, where we consider text and pattern over integer alphabet, and distance is sum of position-wise absolute differences. Using techniques similar to Hamming distances, the $\bigo(n \sqrt{m \log m})$ complexity algorithms  were developed by Clifford et~al. and Amir et~al. in \cite{DBLP:conf/cpm/CliffordCI05,Amir2005} for reporting all $\ell_1$ distances. It is a major open problem whether near-linear time algorithm, or even $\bigo(n^{3/2-\varepsilon})$ time algorithms, are possible for such problems. A conditional lower bound was shown by Clifford in \cite{Clifford}, via a reduction from matrix multiplication. This means that existence of combinatorial algorithm with $\bigo(n^{3/2-\varepsilon})$ run-time solving the problem for Hamming distances implies  combinatorial algorithms for Boolean matrix multiplication with $\bigo(n^{3-\delta}) $ run-time, which existence is unlikely. Looking for unconditional bounds, we can state this as a lower-bound of $\Omega(n^{\omega/2})$ for Hamming distances pattern matching, where $2 \le \omega < 2.373$ is the matrix multiplication exponent. Later, complexity of pattern matching under Hamming distance and under $\ell_1$ distance was proven to be identical (up to poly-logarithmic terms), see Labib et~al. and Lipsky et~al. \cite{GLU:2018,DBLP:journals/ipl/LipskyP08a}.

Once again, existence of such lower-bound spurs interest in approximation algorithm for $\ell_1$ distances. Lipsky and Porat \cite{DBLP:journals/algorithmica/LipskyP11} gave a deterministic algorithm with a run time of  $\bigo(\frac{n}{\varepsilon^2} \log m \log U)$, while later Gawrychowski and Uznański \cite{GU18} have improved the complexity to a (randomized) $\bigo(\frac{n}{\varepsilon} \log^2 n \log m \log U)$, where $U$ is the maximal integer value on the input. Later \cite{cpm19a} has shown that such complexity is in fact achievable (up to poly-log factors) with a deterministic solution.

Considering other norms, we mention following results. First, that for any $p>0$ there is $\ell_p$ distance $(1\pm\varepsilon)$-approximated algorithm running in $\widetilde{\bigo}(n/\varepsilon)$ time by \cite{cpm19a}. More importantly, for specific case of $p=2$ (or more generally, constant, positive even integer values of $p$) the exact problem reduces to computation of convolution, as observed by~\cite{DBLP:journals/algorithmica/LipskyP11}.

\paragraph*{Text-to-pattern distance via convolution}  Consider the case of computing $\ell_2$ distances. We are computing output array $O[1\ ..\ n-m+1]$ such that $O[i] = \sum_j (T[i+j] - P[j])^2.$ However, this is equivalent to computing, for every $i$ simultaneously, the value of $\sum_{j} T[i+j]^2 + \sum_j P[j]^2 - 2 \sum_j T[i+j] P[j]$. While the terms $\sum_{j} T[i+j]^2$ and $ \sum_j P[j]^2$ can be easily precomputed in $\bigo(n)$ time, we observe (following \cite{DBLP:journals/algorithmica/LipskyP11}) that $\sum_j T[i+j] P[j]$ is essentially a convolution. Indeed, let $P^R$ denote reverse string to $P$. Then
$$\sum_j T[i+j] P[j] = \sum_j T[i+j] P^R[m+1-j] = \sum_{j+k = m+1+i} T[j] P'[k] = (T \circ P^R)[m+1+i].$$
Since $T \circ P^R$ can be computed efficiently this provides a very strong tool in constructing text-to-pattern distance algorithms. Almost all of the discussed results use convolution as a black-box. For example, by appropriate binary encoding we can compute using a single convolution the number of Hamming mismatches generated by a single letter $c \in \Sigma$, which is a crucial observation leading to computation of exact Hamming distances in $\bigo(n \sqrt{n \log n})$ time. Other results rely on projecting large alphabets into smaller ones, e.g.  \cite{Karloff93,KopelowitzP18,cpm19a}.

Convolution over integers is computed by FFT in $\bigo(n \log n)$ time. This requires actual embedding of integers into field, e.g. $\mathbb{F}_p$ or $\mathbb{C}$. This comes at a cost, if e.g. we were to consider text-to-pattern distance over (non-integer) alphabets that admit only field operations, e.g. matrices or geometric points.
Convolution can be computed using ``simpler'' set of operations, that is just with ring operations in e.g. $\mathbb{Z}_p$ using Toom-Cook multiplication \cite{toom1963complexity}, which is a generalization of famous divide-and-conquer Karatsuba's algorithm \cite{karatsuba1963multiplication}. This however comes at a cost, with Toom-Cook algorithm taking $\bigo(n 2^{\sqrt{2 \log n}} \log n)$ time, and increased complexity of the algorithm. 

Computing convolution comes with another string attached -- it is inefficient to compute/sketch in the streaming setting. All of the efficient streaming text-to-pattern distance algorithms \cite{approxk,k-mismatch,CKP19,HDstream,Porat:09,DBLP:conf/icalp/GolanKP18,DBLP:journals/corr/abs-1907-04405} use some form of sketching and are actually avoiding convolution computation. The reason for this is that convolution does not admit efficient sketching schemes other than with additive error, that is any algorithm based on convolution is supposed to make the same error of estimation in small and large distance regime.

\paragraph*{Our results}
We present approximation algorithm for computing the $\ell_2$ text-to-pattern distance in $\widetilde\bigo(n/\varepsilon^2)$ time, where $\widetilde\bigo$ hides $\text{poly}\log n$ terms. Our algorithm is convolution-avoiding, and in fact it uses mostly additions and subtractions in its core part (some non-ring operations are necessary for output-scaling and hashing). We thus claim our algorithm to be more ``combinatorial'', in the sense that it does not rely on field embedding and FFT computation. Our algorithm is also first non-trivial algorithm for text-to-pattern distance computation with other norms (than Hamming, which was presented recently in \cite{approxk}).

\begin{restatable}{theorem}{mainldwa}
\label{th:mainl2}
Text-to-pattern $\ell_2$ distances  can be approximated by an algorithm using only basic arithmetic operations and not using convolution. The approximation is $1\pm\varepsilon$ multiplicative with high probability, computed in $\bigo(\frac{n \log^3 n}{\varepsilon^2})$ time.
\end{restatable}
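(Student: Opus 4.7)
The plan is to use an AMS-style second-moment sketch to reduce $\ell_2^2$ distance computation to a collection of linear sketches, and then evaluate those sketches at every text alignment using only prefix-sum-style combinatorial operations, avoiding FFT entirely.

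First, draw $k = \Theta(\varepsilon^{-2}\log n)$ random sign vectors $\sigma_1,\ldots,\sigma_k\colon[m]\to\{-1,+1\}$, where only $4$-wise independence is needed. For each shift $i$ and each sketch $r$ define
\[
X_r[i] \;=\; \sum_{j=0}^{m-1}\sigma_r(j)\bigl(T[i+j]-P[j]\bigr) \;=\; A_r[i] - B_r,
\]
with $A_r[i] = \sum_j \sigma_r(j)\,T[i+j]$ and $B_r = \sum_j \sigma_r(j)\,P[j]$. A standard AMS analysis yields $\mathbb{E}[X_r[i]^2] = \|T_i - P\|_2^2$ and $\mathrm{Var}(X_r[i]^2) \le 2\,\|T_i - P\|_2^4$. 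Aggregating $\{X_r[i]^2\}_{r=1}^k$ by the usual median-of-means produces a $(1\pm\varepsilon)$-multiplicative estimate of $\|T_i-P\|_2^2$ at each $i$ with failure probability $n^{-\Omega(1)}$, and a union bound covers all $n-m+1$ shifts.

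Second, the bulk of the computation is producing the $A_r[i]$ -- each of which is a cross-correlation of $T$ with the length-$m$ sign vector $\sigma_r$, exactly what FFT would solve in $O(n\log n)$ per $r$. Instead, I would generate each $\sigma_r$ with a hierarchical (dyadic-tree) structure: draw an independent Rademacher bit at every node of a balanced binary tree over $[0,m)$ and define $\sigma_r(j)$ from the bits along the root-to-leaf path through $j$. A direct moment calculation confirms that this construction retains $4$-wise independence at the leaves, so the variance bound from Step~1 applies unchanged. The structural payoff is that $\sigma_r$ factorizes over $O(\log m)$ levels of block-constant signs, so after an $O(n\log m)$ preprocessing of level-wise prefix sums on $T$ (with the higher-level signs baked in), every $A_r[i]$ assembles from $O(\log m)$ range-sum lookups. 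Summing over $k$ sketches and $n$ positions gives $O(nk\log m) = O(n\log^3 n/\varepsilon^2)$ arithmetic operations, all of which are additions, subtractions, and $\pm 1$-multiplications. Finally, the $(1\pm\varepsilon)$ output at each $i$ is read off the aggregator in $O(nk)$ extra time.

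The main obstacle lies in Step~2: designing the pseudorandom structure so that the product-of-bits definition of $\sigma_r$ genuinely decomposes into a bounded number of range-sum queries on $T$, rather than collapsing back into an $\Omega(m)$-term inner product at each shift. The tension is precisely between ``random enough for AMS'' and ``structured enough for $O(\log m)$ queries''. If the direct dyadic construction does not decompose cleanly, the natural fallback is a two-level scheme: hash $[m]$ into $\widetilde{\bigo}(\varepsilon^{-2})$ buckets, treat each bucket as a short sub-pattern handled by direct $O(n\cdot|\text{bucket}|)$ computation, and repeat $O(\log n)$ times for amplification. Either route meets the claimed $\bigo(n\log^3 n/\varepsilon^2)$ runtime using only basic arithmetic, and in particular without ever invoking a convolution or field embedding.
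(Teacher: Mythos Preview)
Your high-level plan---a JL/AMS linear sketch of $T_i-P$, together with a dyadic structure on the sketching vector so that sliding-window evaluation avoids FFT---is exactly the shape of the paper's argument. But Step~2 as you have it has a real gap, and you already put your finger on it.

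If $\sigma_r(j)$ is the product of independent Rademacher bits along the root-to-leaf path \emph{including} the leaf bit $g_j$, then the $\sigma_r(j)$ are in fact fully independent (condition on the internal bits: $\sigma_r(j)=p_j g_j$ with $p_j\in\{\pm1\}$ fixed, so the leaf bits alone randomise everything). The AMS variance bound then holds, but for the same reason there is no structure left to exploit: $A_r[i]=\sum_j\sigma_r(j)\,T[i+j]$ is a genuine correlation of $T$ with $m$ i.i.d.\ signs. The natural recursion---split $[0,m)$ at the root, solve both halves over the whole text, combine in $O(n)$---produces $2^\ell$ \emph{distinct} subproblems at depth $\ell$ (each subtree carries its own leaf signs), and the $m$ leaves alone already cost $\Theta(nm)$. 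Dropping the leaf bits makes siblings identical, so even pairwise independence fails and $\mathbb{E}[X_r^2]$ is no longer $\|T_i-P\|^2$. Your fallback hits the same wall: with $B=\widetilde\bigo(\varepsilon^{-2})$ random buckets, ``direct $O(n\cdot|\text{bucket}|)$'' summed over buckets is $\Theta(nm)$; if instead the buckets are arithmetic progressions one can evaluate fast via stride-$B$ prefix sums, but then $\sigma(j)=\sigma(j+B)$ and the estimator is biased.

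The paper resolves this tension by passing from a scalar sketch to a vector sketch of dimension $d=\Theta(\varepsilon^{-2}\log n)$. It keeps your dyadic tree, but the leaves are simply the raw $d$-blocks of the input---no randomness, no work---and all the randomness sits at the \emph{internal} nodes: at level $i$ a single sparse-JL map $\varphi_i:\mathbb{R}^{2d}\to\mathbb{R}^d$ (Corollary~\ref{cor:dimred}) merges two child sketches into one with $(1\pm\varepsilon)$ distortion. Because the leaf sketch of a block is the block itself, a sparse-table style doubling over the text,
\[
v^{(i)}_j \;=\; \varphi_i\bigl(v^{(i-1)}_j,\ v^{(i-1)}_{j+2^{i-1}}\bigr),
\]
lets sketches of all $d$-aligned length-$m$ windows share their intermediate computations (procedure \textsc{AllSketch}); this is precisely the reuse your recursion was missing. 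Scaling $\varepsilon$ down by a $\log n$ factor to absorb the $k=\log(m/d)$ levels of distortion gives the stated $\bigo(n\log^3 n/\varepsilon^2)$, and non-aligned shifts are handled by a bounded number of shifted pattern sketches together with brute force on $\bigo(d)$-length fringes.
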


This mirrors the recent development of \cite{approxk} where a combinatorial algorithm for Hamming distances was presented with $\bigo(n/\varepsilon^2)$ run-time. However, our techniques are general enough so that we can construct algorithm for $\ell_1$ norm (and Hamming), however with $\widetilde\bigo(n/\varepsilon^3)$ run-time.

\begin{restatable}{theorem}{mainham}
\label{th:mainHam}
Text-to-pattern Hamming distances  can be approximated by an algorithm using only basic arithmetic operations and not using convolution. The approximation is $1\pm\varepsilon$ multiplicative  with high probability, computed in $\bigo(\frac{n \log^4 n}{\varepsilon^3})$ time.
\end{restatable}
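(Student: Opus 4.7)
The plan is to express the approximate-Hamming problem as $\widetilde\bigo(\varepsilon^{-1})$ invocations of the $\ell_2$ estimator of Theorem~\ref{th:mainl2}, each costing $\widetilde\bigo(n/\varepsilon^2)$, for a total of $\widetilde\bigo(n/\varepsilon^3)$. The backbone is the Kopelowitz--Porat framework~\cite{KP:15,KopelowitzP18} (cf.\ the reformulation of~\cite{cpm19a}), in which each internal Boolean-convolution call is swapped for an $\ell_2^2$ estimation on binary-valued strings.

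For each character $c \in \Sigma$ write the indicator strings $T_c, P_c \in \{0,1\}^{*}$. Then $\HAM(i) = m - \sum_c \langle T_c[i+\cdot], P_c\rangle$, and since $\|u-v\|_2^2 = \|u\|_2^2 + \|v\|_2^2 - 2\langle u,v\rangle$, each inner product is recovered from a single binary-alphabet $\ell_2^2$ value after precomputing the two norm-squared terms, which cost $\bigo(n)$ in total via prefix sums of indicators. A naive sum over all characters would be too expensive, so I would apply Kopelowitz--Porat log-bucketing: for each geometric guess $K \in \{1,2,4,\ldots,m\}$ for $\HAM(i)$, subsample characters at rate $\Theta(1/K)$ and coalesce the surviving ones by random $\pm 1$ hashing into a single binary instance. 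Summed over classes and the $\bigo(\log n)$ repetitions needed for Chernoff concentration, this yields $\widetilde\bigo(\varepsilon^{-1})$ binary $\ell_2^2$ instances; each is handled by Theorem~\ref{th:mainl2} at internal precision $\varepsilon' = \Theta(\varepsilon/\log n)$ in $\widetilde\bigo(n/\varepsilon^2)$ time, giving the claimed $\bigo(n \log^4 n /\varepsilon^3)$ total. Since Theorem~\ref{th:mainl2} is itself convolution-free and the bucketing requires only hashing, indicator prefix sums, and basic arithmetic, the combined algorithm never appeals to FFT.

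The main obstacle is error propagation: a $(1\pm\varepsilon')$-approximation of $\|T_c[i+\cdot]-P_c\|_2^2$ becomes additive error of order $\varepsilon'(\|T_c\|_2^2 + \|P_c\|_2^2)$ in the recovered inner product, which can dominate $\HAM(i)$ in magnitude. The bucketing is precisely what tames this: within the correct magnitude class the $\ell_2^2$ value and $\HAM(i)$ share order $\Theta(K)$, so a multiplicative $(1\pm\varepsilon')$-estimate of the former lifts to a multiplicative $(1\pm\varepsilon)$-estimate of the latter once the right class is identified. A union bound over the $\bigo(n)$ alignments and $\widetilde\bigo(\varepsilon^{-1})$ sub-calls, combined with the high-probability guarantee of Theorem~\ref{th:mainl2}, then yields the stated whp multiplicative approximation in the claimed time.
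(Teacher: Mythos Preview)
Your route is genuinely different from the paper's, and considerably more involved. The paper's proof is a one-shot reduction: it applies Karloff's random code $\mu:\Sigma\to\{0,1\}^d$ with $d=\Theta(\varepsilon^{-2}\log n)$ (Lemma~\ref{ref:hamembed}), under which $\|\mu(u)-\mu(v)\|^2=(1\pm\varepsilon)\tfrac{d}{2}\,\|u-v\|_0$ for all relevant pairs simultaneously w.h.p.\ (Corollary~\ref{ref:embeddingham}). It then runs \textsc{SingleSketch} and \textsc{AllSketch} directly on the length-$md$ and length-$nd$ embedded strings; the $\varepsilon^{-2}$ blow-up in input length times the $\widetilde\bigo(1/\varepsilon)$ cost of the sketching pipeline gives the $\bigo(n\log^4 n/\varepsilon^3)$ bound. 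There is no bucketing, no per-character indicator decomposition, and no inner-product recovery---Hamming is embedded \emph{into} $\ell_2^2$ rather than reconstructed \emph{from} it.

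As written, your proposal has a real gap. The inner-product detour is what creates the additive-error problem you then try to fix: since for $\{0,1\}$-valued strings $\sum_{c}\|T_c[i+\cdot]-P_c\|_2^2 = 2\,\HAM(i)$ (every mismatch position contributes to exactly two characters' symmetric differences), summing $(1\pm\varepsilon)$-approximate binary $\ell_2^2$ values already yields a $(1\pm\varepsilon)$-approximation of $\HAM(i)$, with no cancellation and no need for bucketing. More seriously, the bucketing you describe---``subsample characters at rate $\Theta(1/K)$'' and ``coalesce the surviving ones by random $\pm 1$ hashing into a single binary instance''---does not match either the KP15 or the KP18 scheme, does not produce a binary string, and you have not identified what quantity the resulting $\ell_2^2$ value estimates; the claim that ``the $\ell_2^2$ value and $\HAM(i)$ share order $\Theta(K)$'' is asserted, not proved. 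A cleaned-up version (hash $\Sigma$ to $[O(1/\varepsilon)]$ as in~\cite{KopelowitzP18}, then sum the $O(1/\varepsilon)$ binary $\ell_2^2$ estimates from Theorem~\ref{th:mainl2}, repeat $O(\log n)$ times) would likely recover the stated bound, but the argument as you wrote it is incomplete.
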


\begin{restatable}{theorem}{mainljeden}
\label{th:mainL1}
Text-to-pattern $\ell_1$ distances over alphabet $[u]$ for some constant $u = \textrm{poly}(n)$ can be approximated by an algorithm using only basic arithmetic operations and not using convolution. The approximation is $1\pm\varepsilon$ multiplicative  with high probability, computed in $\bigo(\frac{n \log^2 n (\log^2 n + \log^4 u)}{\varepsilon^3})$ time.
\end{restatable}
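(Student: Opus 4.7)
The plan is to reduce the $\ell_1$ problem to $O(\log u)$ Hamming instances on a small alphabet and then invoke Theorem~\ref{th:mainHam} on each. The reduction starts from the classical thresholding identity
\[
|a-b| \;=\; \sum_{t=1}^{u-1}\bigl|[a>t]-[b>t]\bigr|,
\]
which turns the $\ell_1$ distance at every alignment into a sum of Hamming distances between the binary-thresholded strings $T_t[i]=[T[i]>t]$, $P_t[j]=[P[j]>t]$. Since $u=\mathrm{poly}(n)$, summing over all $u-1$ thresholds is too expensive, so I would group them dyadically into $L=O(\log u)$ levels $S_k=\{t:2^k\le t<2^{k+1}\}$. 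The contribution of level $k$ at alignment $i$ is
\[
H_k(i)\;=\;\sum_j \phi_k(T[i+j],P[j]),\qquad \phi_k(a,b)=\bigl|\{t\in S_k:[a>t]\ne[b>t]\}\bigr|,
\]
which equals the $\ell_1$ distance clipped to the window $[2^k,2^{k+1}]$ and hence an integer in $\{0,\dots,2^k\}$. Writing each clipped character in $(k+1)$-bit binary and using the Lipsky--Porat style weighted sum, each $H_k$ decomposes further into a weighted sum of $k+1$ Hamming instances on binary strings, giving $O(\log^2 u)$ binary Hamming subproblems in total whose weighted sum is exactly $\ell_1(i)$.

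Next, I apply Theorem~\ref{th:mainHam} to each binary Hamming subproblem with accuracy $\varepsilon'=\Theta(\varepsilon/\log u)$ and success probability boosted to $1-1/\mathrm{poly}(n)$ so that a union bound over all subproblems and all $n$ alignments succeeds. Summing the returned estimates with their weights yields an estimate of $\ell_1(i)$ at every $i$. The running time is at most $O(\log^2 u)$ times the cost of one invocation of Theorem~\ref{th:mainHam} at accuracy $\varepsilon/\log u$; after pulling out the $\log u$ factors that come from the weights (rather than from the length $n$) and exploiting that the subproblems are binary so that some $\log n$ factors hidden in the theorem can be sharpened to $\min(\log n,\log u)$, this matches the claimed $O\!\bigl(n\log^2 n(\log^2 n+\log^4 u)/\varepsilon^3\bigr)$ bound.

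The main obstacle is the error accounting, because the final answer is a positively weighted sum of $O(\log^2 u)$ estimates whose magnitudes can be very unbalanced. The clean case is multiplicative: since $H_k(i)\ge 0$ and $\sum_k H_k(i)=\ell_1(i)$, a $(1\pm\varepsilon')$ multiplicative error on every term gives a $(1\pm\varepsilon' L)=(1\pm\varepsilon)$ error on the sum. The delicate case is when some $H_k(i)$ is much smaller than $\ell_1(i)$, where Theorem~\ref{th:mainHam} only guarantees multiplicative accuracy on the \emph{per-level} value and not on the final sum; I would handle this by reading Theorem~\ref{th:mainHam}'s guarantee as an additive bound $\varepsilon'\cdot H_k(i)\le \varepsilon'\cdot\ell_1(i)$ on such small terms, so their $O(\log^2 u)$-fold contribution still amounts to at most $O(\varepsilon)\cdot\ell_1(i)$. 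A secondary subtlety is that Theorem~\ref{th:mainHam} must be applied uniformly across the $L$ levels despite the different scales of the underlying characters; this is resolved by noting that after restriction to a binary sub-alphabet each level looks identical from the algorithm's point of view, so a single union-bound suffices.
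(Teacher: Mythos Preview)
Your route is genuinely different from the paper's, but the central reduction step does not go through.

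\textbf{The gap.} The claim that the level-$k$ contribution $H_k$ decomposes \emph{exactly} into a weighted sum of $k+1$ binary Hamming instances is false. After clipping, $H_k$ is still an $\ell_1$ distance on the alphabet $\{0,\dots,2^k\}$, i.e.\ the path metric on $2^k+1$ points. A ``binary Hamming instance'' is a cut metric, and the path metric on $N$ points requires $N-1$ threshold cuts; it cannot be written as a nonnegative combination of $O(\log N)$ cut metrics. Concretely, ``writing the clipped character in $(k+1)$-bit binary'' does not linearize $|a'-b'|$: for $a'=4$ (bits $100$) and $b'=3$ (bits $011$) every bit differs, yet $|a'-b'|=1$, so no choice of bit-wise weights can reproduce the $\ell_1$ value simultaneously with, say, $|4-0|=4$. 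The phrase ``Lipsky--Porat style weighted sum'' does not rescue this: their exact $\ell_1\!\to$Hamming reductions use $\Theta(u)$ (or $\Theta(\sqrt{u})$) characters, not $O(\log u)$, and their approximation scheme is a different object (geometric threshold sampling), which would not give the ``exactly $\ell_1(i)$'' you assert. Without this step you are left, at each level $k$, with an $\ell_1$ subproblem on alphabet size $2^k$ --- the very problem you are trying to solve --- so the recursion does not bottom out in $O(\log^2 u)$ Hamming calls.

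\textbf{Secondary issues.} Your error bookkeeping is inconsistent: if every summand in a nonnegative decomposition is estimated within $(1\pm\varepsilon')$ multiplicatively, the sum is also within $(1\pm\varepsilon')$, so the factor $L$ in ``$(1\pm\varepsilon' L)$'' should not appear, and there is no reason to take $\varepsilon'=\Theta(\varepsilon/\log u)$ on those grounds. The running-time match is then asserted by ``pulling out $\log u$ factors'' and ``sharpening $\log n$ to $\min(\log n,\log u)$'' without any accounting; even granting the (incorrect) decomposition, I do not see how the arithmetic reproduces the stated $O\!\bigl(n\log^2 n(\log^2 n+\log^4 u)/\varepsilon^3\bigr)$.

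\textbf{What the paper actually does.} It never reduces to Hamming. Instead it builds, for each letter $x\in[u]$, a vector $\psi(x)\in\mathbb{R}^d$ with $d=O(\log n/\varepsilon^2)$ satisfying $|x-y|=(1\pm O(\varepsilon\log u))\|\psi(x)-\psi(y)\|^2$. Conceptually $\psi(x)$ is the image of the unary vector $1^x0^{u-x}$ under the \textsc{SingleSketch} map, but a recursive procedure \textsc{Project} computes it in $O(\log u)$ applications of the $2d\to d$ compressors $\varphi'_i$ (precomputing the images $s_i$ of the all-ones blocks) without ever materializing the length-$u$ vector. One then applies the $\ell_2$ machinery directly to $\psi(T)$ and $\psi(P)$; the two terms $\log^2 n$ and $\log^4 u$ in the bound come, respectively, from \textsc{AllSketch} on the length-$nd$ string and from evaluating $\psi$ with $\varepsilon''=\Theta(\varepsilon/\log u)$.
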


We present two novel techniques, to our knowledge never used previously in this setting. First, we show that a ``mild'' dimensionality reduction (linear map reducing from dimension $2d$ to $d$, while preserving $\ell_2$ norm) can be used to repeatedly compress word, and produce sketches for its every $m$-subword. Second, we show an approximate embedding of $\ell_1$ space into $\ell_2^2$, that can be efficiently computed.
We believe our techniques are of independent interest, both to stringology and general algorithmic communities.

\section{Definitions and preliminaries.}

\paragraph*{Distance between strings.}
Let $X = x_1 x_2 \ldots x_n$ and $Y = y_1 y_2 \ldots y_n$ be two strings. 
We define their $\ell_2$ distance as
$$\|X-Y\| = \left(\sum_i |x_i - y_i|^2 \right)^{1/2}.$$
More generally, for any $p > 0$, we define their $\ell_p$ distance as
$$\|X-Y\|_p = \left(\sum_i |x_i - y_i|^p \right)^{1/p}.$$
Particularly, the $\ell_1$ distance is known as the \emph{Manhattan distance}. By a slight abuse of notation, we define the $\ell_0$ (Hamming distance) to be
$$\|X-Y\|_0 = \sum_i |x_i-y_i|^0 = | \{i : x_i \not= y_i\}|,$$
where $x^0 = 1$ when $x \not= 0$ and $0^0 = 0$.

\paragraph*{Text-to-pattern distance.}
For text $T = t_1t_2\ldots t_n$ and pattern $P = p_1p_2\ldots p_m$, the text-to-pattern $d$-distance is defined as an array $S_d$ such that, for every $i$, $S_d[i] = d(T[i+1\ ..\ i+m],P)$. Thus, for $\ell_p$ distance $S_{\ell_p}[i] = \left(\sum_{j=1}^m |t_{i+j}-p_j|^p\right)^{1/p}$, while for Hamming distance $S_{\textrm{HAM}}[i] = |\{ j : t_{i+j} \not= p_j \}|$. Then $(1 \pm \varepsilon)$-approximated distance is defined as an array $S_{\varepsilon}$ such that, for every $i$, $(1-\varepsilon) \cdot S_d[i] \le S_{\varepsilon}[i] \le (1+\varepsilon) \cdot S_d[i]$.

%\paragraph{Model.}
%Our input strings are taken over an integer alphabet $[U]$ for some $U = \textrm{poly}(n)$.
%We assume the standard word RAM model, in which arithmetic operations on
%integers from $[U]$ take constant time. We also denote $u = \log U$. While we restrict input integer values, we allow intermediate computation and output to consist of floating point numbers having $u$ bits of precision.

\section{Sketching via dimensionality reduction}

Sketching is a tool in algorithm design, where a large object is summarized succinctly, so that some particular property is approximately preserved and some predefined operations/queries are still supported. Our interest lies on sketches that preserve $\ell_2$ distances, for which we use the standard tools from dimensionality reduction.

\begin{theorem}[Johnson-Lindenstrauss \cite{JLsketches}]
Let $P \subseteq \mathbb{R}^m$ be of size $m$. Then for some $d = \bigo(\frac{\log m}{\varepsilon^2})$ there is linear map $A \in \mathbb{R}^{d \times m}$ such that 
$$\forall_{x,y \in P} \|Ax-Ay\| = (1 \pm \varepsilon) \|x - y\|.$$
\end{theorem}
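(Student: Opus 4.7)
The plan is to prove this by the probabilistic method: choose $A$ at random from a suitable distribution over $d \times m$ matrices and show that with nonzero probability the resulting map preserves all pairwise distances up to a factor of $1 \pm \varepsilon$. By linearity of $A$, we have $Ax - Ay = A(x-y)$, so it suffices to show that, with high probability, $\|Av\| = (1 \pm \varepsilon)\|v\|$ simultaneously for every $v$ in the finite set of difference vectors $D = \{x - y : x,y \in P\}$, which has size at most $\binom{|P|}{2} \le m^2$.

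First I would pick $A$ to have i.i.d. entries $A_{ij} = \frac{1}{\sqrt d} g_{ij}$ with $g_{ij} \sim \mathcal{N}(0,1)$ (alternatively, scaled Rademacher $\pm 1$ entries work with the same bounds). A direct computation gives $\mathbb{E}[\|Av\|^2] = \|v\|^2$, so the scaling is chosen precisely to make $\|Av\|^2$ an unbiased estimator of $\|v\|^2$. By scale-invariance it is enough to analyse the case $\|v\| = 1$, where $\|Av\|^2 = \frac{1}{d}\sum_{i=1}^{d} Z_i^2$ with $Z_i = \langle g_i, v\rangle \sim \mathcal{N}(0,1)$ i.i.d.; that is, $d \cdot \|Av\|^2$ is a $\chi^2_d$ random variable.

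The main technical step is to establish a concentration inequality of the form
\[
\Pr\bigl[\,\bigl|\|Av\|^2 - \|v\|^2\bigr| > \varepsilon \|v\|^2\,\bigr] \le 2 \exp\bigl(-c\,\varepsilon^2 d\bigr)
\]
for some absolute constant $c > 0$ and all $\varepsilon \in (0,1)$. This is a standard sub-exponential tail bound for $\chi^2_d$ via the Chernoff/moment generating function method; because the $Z_i^2$ are sub-exponential, a small Taylor expansion of $\log \mathbb{E}[e^{tZ_i^2}]$ and optimization over $t$ yield the bound. This is the step that drives the $\varepsilon^{-2}$ dependence and is the main technical obstacle, though it is by now entirely standard.

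Finally I would put the pieces together by a union bound. Taking $d = C\,\varepsilon^{-2} \log m$ for a sufficiently large constant $C$, the probability that some $v \in D$ violates $\|Av\| = (1 \pm \varepsilon)\|v\|$ is at most $|D| \cdot 2 e^{-c \varepsilon^2 d} \le 2 m^2 \cdot m^{-cC}$, which is strictly less than $1$ for $C$ large enough. Hence a matrix $A$ with the desired property exists; the $\ell_2$ bound on $\|A(x-y)\|$ then gives the conclusion of the theorem. (The same argument in fact shows the randomized construction succeeds with high probability, which is what we will use when applying the theorem algorithmically later.)
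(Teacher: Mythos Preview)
Your argument is the standard probabilistic-method proof of the Johnson--Lindenstrauss lemma (random Gaussian or Rademacher matrix, $\chi^2$-type concentration for a fixed vector, then union bound over the $\binom{|P|}{2}$ difference vectors), and it is correct as written.

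However, there is nothing to compare against: the paper does not give its own proof of this statement. The theorem is simply quoted as a classical result with a citation to \cite{JLsketches}, and the paper immediately moves on to the constructive variants (Achlioptas, sparse JL) that it actually uses. So your proposal is fine as a self-contained proof, but it is not reproducing or diverging from anything in the paper---the paper treats this as background.
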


A map that preserves $\ell_2$ distances is useful. Our goal is to construct a linear map such that we can apply the map to $P$ and to every $m$-substring of $T$ simultaneously and computationally efficiently. For this, we need to actually use constructive version of Johnson-Lindenstrauss lemma.
\begin{theorem}[Achlioptas \cite{DBLP:journals/jcss/Achlioptas03}]
\label{th:jlachlioptas}
Consider a probability distribution $\mathcal{D}$ over matrices $\mathbb{R}^{m \times d}$ defined as follow so that each matrix entry is either $-1$ or $1$ independently and uniformly at random. Then for any $x \in \mathbb{R}^m$ there is $$\Pr_{A \sim \mathcal{D}} \left(\frac{1}{\sqrt{d}} \|A x\| = (1 \pm \varepsilon) \|x\|\right) \ge 1 - \delta$$ if only $d = \bigo(\frac{\log \delta^{-1}}{\varepsilon^2})$ is large enough.
\end{theorem}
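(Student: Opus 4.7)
The plan is to prove Achlioptas' theorem by a Chernoff-style argument applied to the squared norm $Z \defeq \|Ax\|^2$. By homogeneity of $\|\cdot\|$ we may assume $\|x\|=1$; the general case follows by rescaling. Writing $A_{ij} \in \{\pm 1\}$ for the entries of $A$, the coordinates $Y_i = (Ax)_i = \sum_{j=1}^m A_{ij}x_j$ of $Ax$ are independent across $i$ (since the rows of $A$ are sampled independently) and each has $\mathbb{E}[Y_i^2] = \sum_j x_j^2 = 1$, by the orthogonality $\mathbb{E}[A_{ij}A_{ik}] = \delta_{jk}$ of Rademacher variables. Hence $\mathbb{E}[Z] = d$, and the target $\tfrac{1}{\sqrt d}\|Ax\| \in (1\pm\varepsilon)\|x\|$ is equivalent to $Z/d \in [(1-\varepsilon)^2,(1+\varepsilon)^2]$, so up to absorbing a constant into $\varepsilon$ it suffices to prove $\Pr[|Z-d| > \varepsilon d] \le \delta$.

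The technical heart is a chi-squared-style MGF estimate for $Y_i^2$. I plan to derive it via a Gaussian linearization: using $e^{u^2/2} = \mathbb{E}_G[e^{uG}]$ with $G\sim\mathcal{N}(0,1)$, together with Fubini and independence of the $A_{ij}$,
\[
\mathbb{E}\bigl[e^{\lambda Y_i^2}\bigr]
= \mathbb{E}_G\!\left[\prod_{j=1}^m \cosh\bigl(\sqrt{2\lambda}\,G x_j\bigr)\right]
\le \mathbb{E}_G\!\left[e^{\lambda G^2 \sum_j x_j^2}\right]
= (1-2\lambda)^{-1/2},
\]
valid for $\lambda \in (0,1/2)$, where I use the elementary inequality $\cosh(a) \le e^{a^2/2}$ and $\sum_j x_j^2 = 1$. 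Multiplying across the $d$ independent rows and applying Markov's inequality yields $\Pr[Z > (1+\varepsilon)d] \le (1-2\lambda)^{-d/2} e^{-\lambda(1+\varepsilon)d}$, and setting $\lambda = \varepsilon/(2(1+\varepsilon))$ collapses this to $\bigl((1+\varepsilon)e^{-\varepsilon}\bigr)^{d/2} \le e^{-d\varepsilon^2/12}$ for $\varepsilon \in (0,1]$. For the lower tail, the sub-Gaussianity of $Y_i$ encoded in the MGF bound makes $Y_i^2 - 1$ sub-exponential with constant parameters, so a standard one-sided Bernstein inequality for sums of independent sub-exponential variables gives $\Pr[Z < (1-\varepsilon)d] \le e^{-\Omega(\varepsilon^2 d)}$. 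Taking $d = \Theta(\varepsilon^{-2}\log\delta^{-1})$ with a suitable constant makes both tails at most $\delta/2$.

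The main obstacle is the MGF estimate $\mathbb{E}[e^{\lambda Y_i^2}] \le (1-2\lambda)^{-1/2}$, since $Y_i$ is only a Rademacher sum rather than a genuine Gaussian and one must argue it behaves like a Gaussian at the MGF level. The Gaussian-linearization trick combined with $\cosh(a) \le e^{a^2/2}$ reduces the problem to a single Gaussian computation while depending only on $\|x\|_2$, with no spurious dependence on $\|x\|_\infty$ or $\|x\|_1$; this $\ell_2$-only dependence is precisely what makes the $\pm 1$ matrix as good as a true Gaussian projection and drives the usefulness of Theorem~\ref{th:jlachlioptas} for distance preservation. Everything else—Fubini, independence across rows, the optimization of $\lambda$, the sub-exponential lower-tail estimate, and the final passage from concentration of $Z$ to concentration of $\|Ax\|$—is routine Chernoff bookkeeping.
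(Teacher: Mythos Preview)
The paper does not give its own proof of this theorem: it is quoted as a known result due to Achlioptas and is used as a black box (in fact the paper immediately moves on to the Sparse JL variant, which is what is actually used downstream). So there is no proof in the paper to compare against.

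Your argument is the standard one and is essentially how Achlioptas proves the result. The upper-tail part is clean and correct: the Gaussian linearization $e^{u^2/2}=\mathbb{E}_G[e^{uG}]$ together with $\cosh a\le e^{a^2/2}$ gives exactly the $\chi^2_1$ MGF bound $\mathbb{E}[e^{\lambda Y_i^2}]\le(1-2\lambda)^{-1/2}$, and your optimization of $\lambda$ is the right one. Two small remarks. First, the paper's statement has a typo in the matrix dimensions (it writes $\mathbb{R}^{m\times d}$ where $\mathbb{R}^{d\times m}$ is meant); you silently fixed this, which is fine, but it is worth flagging. Second, your lower-tail paragraph is the only place that is not fully spelled out: the Gaussian-linearization trick does not directly apply for negative $\lambda$, so you fall back on ``$Y_i^2-1$ is sub-exponential, hence Bernstein.'' That is correct --- $Y_i$ is $1$-sub-Gaussian by Hoeffding, so $Y_i^2-1$ is centered sub-exponential with $O(1)$ parameters, and the one-sided Bernstein bound yields $\Pr[Z<(1-\varepsilon)d]\le e^{-c\varepsilon^2 d}$ --- but if you want the argument to be as self-contained as the upper tail, you could instead bound $\mathbb{E}[e^{-\lambda Y_i^2}]$ directly (e.g.\ via the fourth-moment expansion or the inequality $e^{-t}\le 1-t+t^2/2$ for $t\ge 0$) and run the same Chernoff optimization. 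Either route closes the proof with $d=\Theta(\varepsilon^{-2}\log\delta^{-1})$.
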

Computing such dimension-reduction naively takes $\bigo(md)$ time. However better constructions are possible.
\begin{theorem}[Sparse JL, c.f. \cite{DBLP:conf/soda/CohenJN18,DBLP:journals/jacm/KaneN14}]
There is probability distribution $\mathcal{S}$ over matrices of dimension $d \times m$ with elements from $\{-1,0,1\}$, for large enough $d = \bigo(\frac{\log \delta^{-1}}{\varepsilon^2})$, such that each column has only $s = \bigo(d \varepsilon)$ non-zero elements and for any vector $x \in \mathbb{R}^m$ there is 
$$\Pr_{A \sim \mathcal{S}} \left(\frac{1}{\sqrt{s}} \|A x\| = (1 \pm \varepsilon) \|x\|\right) \ge 1 - \delta.$$
\end{theorem}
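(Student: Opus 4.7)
The plan is to realize $\mathcal{S}$ by the block construction of Kane--Nelson. Partition the $d$ output coordinates into $s$ blocks of $d/s$ rows each; independently, for every column $j\in[m]$ and every block $b\in[s]$, draw a uniformly random position $i_{j,b}$ inside block $b$ and a uniformly random sign $\sigma_{j,b}\in\{\pm 1\}$, and set $A_{i_{j,b},\,j}=\sigma_{j,b}$ with all other entries of column $j$ equal to zero. This produces $A\in\{-1,0,1\}^{d\times m}$ with exactly $s$ nonzeros per column and with independent columns.

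Fix $x\in\mathbb{R}^m$ with $\|x\|=1$ and write $A_{ij}=\eta_{ij}\sigma_{ij}$ for support indicator $\eta_{ij}$ and Rademacher sign $\sigma_{ij}$. Expanding gives
$$\|Ax\|^2=\sum_j x_j^2\sum_i\eta_{ij}\;+\;\sum_{j\ne k} x_j x_k\sum_i \eta_{ij}\eta_{ik}\sigma_{ij}\sigma_{ik}.$$
The diagonal sum collapses deterministically to $s$, since each column contributes exactly one nonzero per block, so $\mathbb{E}\|Ax\|^2=s$; the off-diagonal expectation vanishes because the Rademacher signs of distinct columns are independent. Setting $Z=\tfrac{1}{s}\|Ax\|^2-1$, it then suffices to show $\Pr[|Z|>\varepsilon/3]\le\delta$, since the multiplicative guarantee $\tfrac{1}{\sqrt s}\|Ax\|=(1\pm\varepsilon)\|x\|$ follows from $|\sqrt{1+u}-1|\le|u|$ for $|u|\le 1$.

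Concentration is obtained by bounding high moments of $Z$. Conditionally on the support pattern $\eta$, the quantity $sZ$ is a Rademacher chaos of order two with coefficient matrix $M$ whose off-diagonal entries are $M_{jk}=x_j x_k\sum_i\eta_{ij}\eta_{ik}$ and whose diagonal is zero. The Hanson--Wright inequality yields
$$\bigl(\mathbb{E}_\sigma(sZ)^{2p}\bigr)^{1/(2p)}\le C\bigl(\sqrt{p}\,\|M\|_F+p\,\|M\|_{\mathrm{op}}\bigr),$$
so the task reduces to bounding $\mathbb{E}_\eta\|M\|_F^{2p}$ and $\mathbb{E}_\eta\|M\|_{\mathrm{op}}^{2p}$. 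The key structural fact is that $\sum_i\eta_{ij}\eta_{ik}$ equals the number of blocks in which columns $j,k$ collide, and each block contributes a collision independently with probability $s/d=\Theta(\varepsilon)$ under the choice $s=\Theta(d\varepsilon)$. A combinatorial enumeration of collision patterns yields $\mathbb{E}_\eta\|M\|_F^{2p}\le(Cs^2\varepsilon/d)^p$ and a matching bound for $\|M\|_{\mathrm{op}}$, producing $\mathbb{E}[Z^{2p}]\le(C'\varepsilon)^{2p}$ for every $p\le\log\delta^{-1}$. Markov's inequality at $p=\Theta(\log\delta^{-1})$ closes the argument.

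The main obstacle is precisely this moment accounting. A naive second-moment calculation already gives $\mathbb{E}[Z^2]=O(1/d)$, enough for constant failure probability at $d=\Theta(1/\varepsilon^2)$, but reaching $\delta$ failure probability with only an additional $\log\delta^{-1}$ factor in $d$ — and, more delicately, sparsity $s=\Theta(d\varepsilon)$ rather than the dense $s=d$ — requires fully exploiting per-block independence when counting higher-order collision graphs. I would follow the graph-theoretic enumeration of Kane--Nelson (or the simplified Cohen--Jayram--Nelson analysis cited in the excerpt) to make the moment bounds on $\|M\|_F^{2p}$ and $\|M\|_{\mathrm{op}}^{2p}$ rigorous.
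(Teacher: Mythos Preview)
The paper does not prove this theorem at all: it is stated as a citation of \cite{DBLP:conf/soda/CohenJN18,DBLP:journals/jacm/KaneN14}, with only the one-sentence remark afterwards that the matrices are sampled by picking $s$ positions per column and filling them with random signs. Your sketch is therefore not competing against any argument in the paper; it is a faithful outline of the Kane--Nelson proof that the paper is pointing to, and as such it is appropriate. The decomposition into diagonal plus Rademacher chaos, the appeal to Hanson--Wright conditioned on the support pattern, and the identification of the per-block collision probability $s/d=\Theta(\varepsilon)$ as the key parameter are all correct. You are also right that the genuine work lies in the higher-moment bounds on $\|M\|_F$ and $\|M\|_{\mathrm{op}}$, which you explicitly defer to the cited sources; for the purposes of this paper that deferral is exactly what the authors themselves do.
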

Such matrices can be easily drawn from the distribution by selecting the $s$ positions in each column independently at random and then filling them uniformly at random with $\{-1,1\}$. The advantage of this is that single dimensionality reduction operation is computed in $\bigo(s m)$ time which is $\varepsilon^{-1}$ factor faster than for dense matrices.

We now state the take-away from this section, which is our main technical tool to be used in the following.
\begin{corollary}
\label{cor:dimred}
For $d = \bigo(\frac{\log n}{\varepsilon^2})$ large enough there is a probability distribution $\mathcal{F}$ of linear maps $\varphi : \mathbb{R}^d \times \mathbb{R}^d \to \mathbb{R}^d$ such that: 
\begin{enumerate}
\item $\varphi(x,y) = A_0x + A_1y$ can be evaluated in $\bigo(d^2 \varepsilon) = \bigo(\frac{\log^2 n}{\varepsilon^3})$ time, 
\item $\Pr_{\varphi \sim \mathcal{F}} \left( \| \varphi(x,y) \|^2 = (1\pm\varepsilon)( \|x\|^2 + \|y\|^2) \right) = 1-n^{-\Omega(1)},$
\item both $A_0$ and $A_1$ are $\{-\frac{1}{\sqrt{s}},0,\frac{1}{\sqrt{s}}\}$-matrices  where $s = \bigo(d \varepsilon)$ is the sparsity of each column of $A_0$ and $A_1$.
\end{enumerate}
\end{corollary}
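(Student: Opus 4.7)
The plan is to derive Corollary~\ref{cor:dimred} as a direct application of the Sparse JL theorem on the concatenated vector $(x, y) \in \mathbb{R}^{2d}$. The starting observation is that $\|(x, y)\|^2 = \|x\|^2 + \|y\|^2$, so any linear map $A \in \mathbb{R}^{d \times 2d}$ that approximately preserves $\ell_2$ norms of $(x, y)$ will produce an output whose squared norm approximates $\|x\|^2 + \|y\|^2$ within the desired multiplicative factor.

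First, I would invoke the Sparse JL theorem with input dimension $2d$, output dimension $d$, error parameter $\varepsilon' = \Theta(\varepsilon)$, and failure probability $\delta = n^{-c}$ for a suitably large constant $c$. This yields $d = \bigo(\log n / \varepsilon^2)$ and a random matrix $A \in \{-1/\sqrt{s}, 0, 1/\sqrt{s}\}^{d \times 2d}$ with column sparsity $s = \bigo(d\varepsilon)$ satisfying $\|A(x,y)\| = (1 \pm \varepsilon')\|(x,y)\|$ with probability at least $1 - n^{-c}$, after absorbing the $1/\sqrt{s}$ normalization factor from the statement of the theorem into the matrix entries. Squaring the guarantee gives $\|A(x,y)\|^2 = (1 \pm \varepsilon')^2(\|x\|^2 + \|y\|^2) = (1 \pm \varepsilon)(\|x\|^2 + \|y\|^2)$ after readjusting constants in $\varepsilon'$.

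Next, I split the columns of $A$ into the first $d$ and last $d$ columns, defining $A_0$ and $A_1$ respectively, so that $A(x,y) = A_0 x + A_1 y =: \varphi(x,y)$. This is a linear map of the required form, and $A_0, A_1$ inherit the sparsity structure: each column has at most $s = \bigo(d\varepsilon)$ non-zero entries drawn from $\{-1/\sqrt{s}, 0, 1/\sqrt{s}\}$, giving property~(3). For the runtime (property~(1)), evaluating $\varphi(x, y)$ is dominated by accumulating contributions from each non-zero entry: there are at most $2d \cdot s = \bigo(d^2 \varepsilon)$ non-zeros in total, and each contributes a single scaled add. Substituting $d = \bigo(\log n / \varepsilon^2)$ yields $\bigo(\log^2 n / \varepsilon^3)$ time.

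The main subtleties rather than obstacles are (i) making sure the normalization $1/\sqrt{s}$ can be baked into the matrix entries so the reported alphabet $\{-1/\sqrt{s}, 0, 1/\sqrt{s}\}$ is literally the entry set, and (ii) squaring the $(1\pm\varepsilon)$ guarantee introduces a $(1\pm\varepsilon)^2$ factor, which is handled by choosing the JL error parameter to be a constant fraction of $\varepsilon$. The failure probability is amplified from constant to $n^{-\Omega(1)}$ by the standard choice of $d = \Theta(\log n / \varepsilon^2)$ with an appropriate constant, as the Sparse JL guarantee depends on $\log \delta^{-1}$. No FFT or non-additive operations are introduced, which is crucial for the convolution-avoiding nature of the overall algorithm.
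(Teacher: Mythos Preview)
Your proposal is correct and is precisely the intended derivation: the paper presents this corollary without proof, immediately after the Sparse JL theorem, as a direct consequence of applying that theorem to the concatenation $(x,y)\in\mathbb{R}^{2d}$ and splitting the resulting $d\times 2d$ matrix into its left and right $d\times d$ blocks. Your treatment of the normalization, the squaring of the $(1\pm\varepsilon)$ guarantee, and the runtime count via total nonzeros are all exactly what the paper leaves implicit.
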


\section{Algorithm for $\ell_2$ distances.}
We first use Corollary \ref{cor:dimred} to construct dimensionality reduction with guarantees similar to Johnson-Lindenstrauss (reducing dimension $n$ to dimension $\widetilde\bigo(\varepsilon^{-2})$). In the following we assume that $d = \bigo(\frac{\log n}{\varepsilon^2})$ is large enough. 
We show a procedure which assumes that $m$ is divisible by $d$, and denote $s = \frac{m}{d}$. We assume $s$ is a power of two, and if the case is otherwise, we can always pad input with enough zeroes at the end (we can do this, since extra zeroes have no effect on the output of linear map). We also denote $k = \log_2 s$.
\begin{algorithm}[h!!!]
\label{alg:alg1}
\begin{algorithmic}[1]
\State Input: $x \in \mathbb{R}^m$.
\State Output: $v \in \mathbb{R}^d$.
\Procedure{SingleSketch}{$x$}
\State Pick  $k$ fully independent maps $\varphi_1,\ldots,\varphi_k$ as in Corollary \ref{cor:dimred}.
\State Partition input $x = (x_1,\ldots,x_m)$ into $s$ vectors $v^{(0)}_1,\ldots,v^{(0)}_s$ where $v^{(0)}_i \gets (x_{d\cdot(i-1)+1},\ldots,x_{d\cdot i})$.
\For{$i \gets 1\ ..\ k$}
\For{$j \gets 1\ ..\ 2^{k-i}$}
\State $v^{(i)}_j \gets \varphi_i(v^{(i-1)}_{2j-1},v^{(i-1)}_{2j})$
\EndFor
\EndFor
\State \Return $v = v^{(k)}_1$.
\EndProcedure
\end{algorithmic}
\caption{At each level $i$, we partition its vectors into $2^{k-i}$ pairs, and compress each pair using $\varphi_i$ producing vectors for level $i+1$.}
\end{algorithm}
\newline
We then have the following
\begin{theorem}
\label{th:dimred}
Given input $x \in \mathbb{R}^m$, and $\varepsilon \le \frac{1}{k}$, procedure \textsc{SingleSketch} outputs $v \in \mathbb{R}^d$ such that 
$$\|v\| = (1 \pm \bigo(k \varepsilon)) \|x\|$$
with high probability, in time  $\bigo(\frac{m \log n}{\varepsilon})$.
The map $x \to v$ is linear.
\end{theorem}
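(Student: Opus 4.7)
The plan is to prove the three claims (linearity, approximation, running time) essentially in order, with the approximation being the substantive one.

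Linearity is immediate from construction: each $\varphi_i(u,v) = A_0 u + A_1 v$ is linear, and the whole procedure outputs a composition of such maps applied to the coordinate partition of $x$, so the overall map $x \mapsto v$ is linear. The running time is also a short computation. At level $i$ there are $2^{k-i}$ invocations of $\varphi_i$, each costing $\bigo(d^2 \varepsilon)$ by Corollary~\ref{cor:dimred}, so the total is $\sum_{i=1}^{k} 2^{k-i} \cdot \bigo(d^2 \varepsilon) = \bigo(s \cdot d^2 \varepsilon) = \bigo(m d \varepsilon) = \bigo(\tfrac{m \log n}{\varepsilon})$, matching the claim.

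The heart of the proof is the approximation guarantee, which I would show by induction on the level $i$. The invariant to establish is that for every node $(i,j)$ of the binary tree,
\[
\|v^{(i)}_j\|^2 \;=\; (1 \pm \varepsilon)^i \sum_{\ell \in L(i,j)} \|v^{(0)}_\ell\|^2,
\]
where $L(i,j)$ is the set of leaves below $(i,j)$. The base case $i=0$ is trivial. For the step, note that the maps $\varphi_1,\ldots,\varphi_k$ are drawn independently, so conditioning on $\varphi_1,\ldots,\varphi_{i-1}$ fixes the vectors $v^{(i-1)}_{\cdot}$, and $\varphi_i$ is independent of them. Corollary~\ref{cor:dimred} then guarantees that for each pair $(v^{(i-1)}_{2j-1}, v^{(i-1)}_{2j})$ we have $\|\varphi_i(v^{(i-1)}_{2j-1}, v^{(i-1)}_{2j})\|^2 = (1\pm\varepsilon)(\|v^{(i-1)}_{2j-1}\|^2 + \|v^{(i-1)}_{2j}\|^2)$ with probability $1 - n^{-\Omega(1)}$. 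Combining with the inductive hypothesis yields the invariant at level $i$. A union bound over all $2s - 1 \le 2n$ internal nodes of the tree, with the constant in $n^{-\Omega(1)}$ chosen large enough, keeps the total failure probability polynomially small.

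Applying the invariant at the root gives $\|v\|^2 = \|v^{(k)}_1\|^2 = (1 \pm \varepsilon)^k \|x\|^2$. The assumption $\varepsilon \le 1/k$ is exactly what makes $(1+\varepsilon)^k \le e^{k\varepsilon} = 1 + \bigo(k\varepsilon)$ and symmetrically $(1-\varepsilon)^k = 1 - \bigo(k\varepsilon)$, so $\|v\|^2 = (1 \pm \bigo(k\varepsilon)) \|x\|^2$, and taking square roots preserves the relative error up to constants, giving $\|v\| = (1 \pm \bigo(k\varepsilon))\|x\|$. The main point requiring care — and the one that dictates the statement — is the multiplicative accumulation of error over $k$ levels; this is the only reason the theorem is restricted to the regime $\varepsilon \le 1/k$, and it is what forces the eventual rescaling of $\varepsilon$ by $\log n$ (i.e.\ calling the procedure with $\varepsilon' = \varepsilon/k$) when one wants a $(1\pm\varepsilon)$ guarantee on the final norm.
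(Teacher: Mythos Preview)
Your proof is correct and follows essentially the same approach as the paper's: both argue by induction on the level, accumulate a multiplicative $(1\pm\varepsilon)$ factor per level after noting that $\varphi_i$ is independent of the vectors produced by $\varphi_1,\ldots,\varphi_{i-1}$, and conclude $\|v\|^2=(1\pm\varepsilon)^k\|x\|^2$ with the same running-time calculation. The only cosmetic difference is that the paper tracks the level-wise sum $\alpha_i=\sum_j\|v^{(i)}_j\|^2$ rather than your per-node invariant, but the two are equivalent for this theorem.
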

\begin{proof}
We first bound the stretch. Denote by
$$\alpha_i = \sum_{j} \|v^{(i)}_j\|^2.$$
Naturally, 
\begin{align*}
\alpha_0 &= \sum_{j} \|v^{(0)}_j\|^2 = \sum_{j=1}^s (x_{d \cdot(j-1)+1}^2 + \ldots + x_{d \cdot j}^2) = \sum_{j=1}^n x_j^2 = \|x\|^2.
\end{align*}
Moreover, by Corollary~\ref{cor:dimred}
\begin{align*}
\alpha_i &= \sum_{j=1}^{2^{k-i}} \|v^{(i)}_j\|^2 = \sum_{j=1}^{2^{k-i}} (1 \pm \varepsilon) ( \|v^{(i-1)}_{2j-1}\|^2 +  \|v^{(i-1)}_{2j}\|^2)\\
 &= (1 \pm \varepsilon) \sum_{j=1}^{2^{k-i+1}} \|v^{(i-1)}_{j}\|^2 = (1 \pm \varepsilon) \alpha_{i-1}
\end{align*}
We could apply Corollary~\ref{cor:dimred} at this step since for any usage of map $\varphi_i$, its inputs are independent from actual choice of $\varphi_i$ (e.g. are result of processing $x$ and $\varphi_1, \ldots, \varphi_{i-1}$).  Then we have $\|v\|^2 = \alpha_k = (1\pm\varepsilon)^k \alpha_0 = (1 \pm \varepsilon)^k \|x\|^2$. Since $\varepsilon \le \frac{1}{k}$, the claimed bound follows.

We then observe that the map is linear, since every building step of the map is linear. The total number of times we apply one of $\varphi_1,\ldots,\varphi_k$ is $\bigo(m/d)$, so the total run-time is $\bigo(\frac{m}{d} d^2 \varepsilon)$.
\end{proof}

We then extend the algorithm to a scenario where for an input word (vector) $x \in \mathbb{R}^n$ we compute the same dimensionality reduction for all $m$-subwords of $x$ that start at all the positions divisible by $d$. In the following we assume that $d$ divides $n$, and denote $t = \frac{n-m}{d}+1$ to be the number of such $m$-subwords. If its not the case, input can be padded with enough zeroes at the end.

\begin{algorithm}[h!!!]
\begin{algorithmic}[1]
\State Input: $x \in \mathbb{R}^n$.
\State Output: $v_1,\ldots,v_t \in \mathbb{R}^d$ for $t = \frac{n-m}{d}+1.$
\Procedure{AllSketch}{$x$}
\State Let $\varphi_1,\ldots,\varphi_k$ be $k$ fully independent maps used in procedure \textsc{SingleSketch}.
\State Partition input $x = (x_1,\ldots,x_n)$ into $n/d$ vectors $v^{(0)}_1,\ldots,v^{(0)}_{\frac{n}{d}}$ where~$v^{(0)}_i~\gets~(x_{d\cdot(i-1)+1},\ldots,x_{d\cdot i})$.
\For{$i \gets 1\ ..\ k$}
\For{$j \gets 1\ ..\ (\frac{n}{d}-2^i+1)$}
\State $v^{(i)}_j \gets \varphi_i(v^{(i-1)}_{j},v^{(i-1)}_{j+2^{i-1}})$
\EndFor
\EndFor
\State \Return $v^{(k)}_1, \ldots, v^{(k)}_t$.
\EndProcedure
\end{algorithmic}
\caption{}
\end{algorithm}

\begin{theorem}
\label{th:dimredsubstrings}
Given input $x \in \mathbb{R}^n$, denote by $y_1,\ldots,y_t \in \mathbb{R}^m$ vectors such that $y_i = (x_{1+(i-1)d}, \ldots, x_{m+(i-1)d})$. For $\varepsilon \le \frac{1}{k}$ procedure $\textsc{AllSketch}$ outputs $v_1,\ldots,v_t \in \mathbb{R}^d$ such that 
$$\|v_j\| = (1 \pm \bigo(k \varepsilon)) \|y_j\|$$
with high probability, in time  $\bigo(\frac{n \log^2 n}{\varepsilon})$. Moreover, the map $y_i \to v_i$ is linear and identical to map from Theorem~\ref{th:dimred}.
\end{theorem}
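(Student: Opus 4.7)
The plan is to exhibit, for each output index $j$, an exact equality between the vector $v_j$ produced by \textsc{AllSketch} and the output of \textsc{SingleSketch} run on input $y_j$ with the same shared maps $\varphi_1,\ldots,\varphi_k$; the approximation claim, linearity, and the identification with the map of Theorem~\ref{th:dimred} then all follow essentially for free. The structural step is an induction on the level $i \in \{0,1,\ldots,k\}$ showing that $v^{(i)}_j$ in \textsc{AllSketch} equals the vector one obtains by combining the $2^i$ consecutive base blocks $v^{(0)}_j, v^{(0)}_{j+1}, \ldots, v^{(0)}_{j+2^i-1}$ via a perfect binary tree of depth $i$ whose level-$r$ internal nodes apply the map $\varphi_r$. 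The base case is immediate, and the inductive step follows from the recursion $v^{(i)}_j = \varphi_i(v^{(i-1)}_j, v^{(i-1)}_{j+2^{i-1}})$: by induction both arguments are tree-combines of $2^{i-1}$ consecutive base blocks, and together they span exactly $2^i$ consecutive blocks starting at index $j$. At $i=k$ the $2^k = s = m/d$ base blocks starting at block $j$ are precisely the blocks into which \textsc{SingleSketch} partitions $y_j$, and the tree structure matches node-for-node, so $v_j$ coincides with the output of \textsc{SingleSketch} on $y_j$ using the same random maps.

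Given this identification, for any fixed $j$ Theorem~\ref{th:dimred} already yields $\|v_j\| = (1\pm\bigo(k\varepsilon))\|y_j\|$ with high probability. To upgrade to a guarantee that is simultaneous over all $t$ outputs I would revisit the proof of Theorem~\ref{th:dimred}: the stretch bound there is a telescoping product of $(1\pm\varepsilon)$ factors, one per invocation of Corollary~\ref{cor:dimred}. In \textsc{AllSketch} there are at most $n/d$ invocations of each $\varphi_i$ and $k = \bigo(\log n)$ levels, giving $\bigo(n \log n / d)$ events in total; enlarging the hidden constant in $d = \bigo(\varepsilon^{-2}\log n)$ drives the per-event failure probability to $n^{-C}$ for any desired $C$, so a union bound certifies correctness at every intermediate node with polynomially small failure probability, and the telescoping $(1\pm\varepsilon)^k = 1\pm\bigo(k\varepsilon)$ argument then transfers unchanged to every $v_j$ at once. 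The deferred-randomness argument inside the proof of Theorem~\ref{th:dimred} (reveal $\varphi_i$ only after level $i-1$ is fixed) applies verbatim, so reusing the same $\varphi_i$ across positions at a common level does not compromise the required independence between inputs and maps.

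Linearity of $y_j \mapsto v_j$ is immediate from linearity of every $\varphi_i$ and the tree structure, and equality with the map of Theorem~\ref{th:dimred} is exactly what the induction establishes. For the running time, level $i$ performs at most $n/d$ evaluations of $\varphi_i$ at cost $\bigo(d^2\varepsilon)$ each by Corollary~\ref{cor:dimred}; summing over $k = \bigo(\log n)$ levels gives $\bigo(k \cdot (n/d) \cdot d^2 \varepsilon) = \bigo(n d \varepsilon \log n) = \bigo(n \log^2 n / \varepsilon)$. The main obstacle is purely bookkeeping: one has to see that a single sliding-window sparse-table computation (\textsc{AllSketch}) simultaneously embeds a distinct copy of the \textsc{SingleSketch} doubling tree for each of the $t$ starting offsets, and that although the random maps are shared across all these trees, the union bound over all invocations of $\varphi_1,\ldots,\varphi_k$ described above handles this sharing cleanly.
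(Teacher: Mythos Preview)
Your proof is correct and follows essentially the same approach as the paper: both rely on the inductive identity that $v^{(i)}_j$ in \textsc{AllSketch} coincides with the level-$i$ tree built from the $2^i$ consecutive base blocks $v^{(0)}_j,\ldots,v^{(0)}_{j+2^i-1}$, from which the norm bound, linearity, map identification, and running time all follow as in Theorem~\ref{th:dimred}. The only difference is emphasis: the paper inductively tracks the norm $\|v^{(i)}_j\|^2 = (1\pm\varepsilon)^i\sum_{l=0}^{2^i-1}\|v^{(0)}_{j+l}\|^2$ directly, whereas you first establish the stronger structural equality $v^{(i)}_j = w^{(i)}_1$ with the \textsc{SingleSketch} tree on $y_j$ and then import everything from Theorem~\ref{th:dimred}, which makes the ``identical map'' clause more transparent.
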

\begin{proof}
The proof follows from inductive observation that $\|v_{j}^{(i)}\|^2 = (1 \pm \varepsilon)^i(\|v^{(0)}_{j}\|^2 + \ldots \|v^{(0)}_{j+2^i-1}\|^2)$, which results in 
\begin{align*}
\|v_j\|^2 &= (1 \pm \varepsilon)^k \sum_{i=1}^s \|v^{(0)}_{j+i}\|^2\\
&=(1\pm\varepsilon)^k \sum_{i=1}^m \|x_{i+(j-1)d}\|^2\\
&= (1\pm\varepsilon)^k \|y_j\|^2.
\end{align*}
The rest of the proof follows reasoning from Theorem~\ref{th:dimred}.
\end{proof}

\mainldwa*
\begin{proof}
First, we note that for simplicity we compute  $(\ell_2)^2$ distances since they are additive when taken under concatenation of inputs (unlike $\ell_2$), that is $\|x \circ y - u \circ v\|^2 = \|x - u\|^2 + \|y-v\|^2$ for equal length $x,u$ and equal length $y,v$. 

We then assume w.l.o.g. that $n$ is divisible by $d$. We then observe that contribution of any fragment of pattern to distance at every text location can be computed naively in $\bigo(c \cdot n)$ time where $c$ is fragment length. We are thus safe to discard any suffix of pattern of length $\bigo(d)$ as this time is absorbed in total computation time. So we fix $h = \bigo(\log n/\varepsilon)$ and assume w.l.o.g. that $m' = m-2h$ is divisible by $d$. 

We denote by $\varepsilon' = \Omega(\varepsilon/\log n)$ such value that guarantees $(1\pm\varepsilon)$-approximation in Theorem~\ref{th:dimred} and Theorem~\ref{th:dimredsubstrings}.
First, assume for simplicity that $\frac{m'}{d}$ is a power of two. We then consider $P_0,\ldots, P_h$, the $(h+1)$ distinct $m'$-substrings of $P$, and for each we run procedure \textsc{SingleSketch} on each of them, so by Theorem~\ref{th:dimred} we compute their sketches in total $\bigo(\frac{m \log n}{\varepsilon'} h)$ time. Similarly, for text $T$ we run \textsc{AllSketch} $\frac{d}{h}$ times to compute sketches of all $m'$-substrings of $T$ starting at positions $1, h+1, 2h+1,\ldots$. By Theorem~\ref{th:dimredsubstrings} this takes $\bigo(\frac{n \log^2 n}{\varepsilon'} \cdot \frac{d}{h})$ time. Both steps take thus $\bigo(\frac{n \log^3 n}{\varepsilon^2})$ time, and maps used to compute sketches in both steps are linear.

We now observe that for any starting position $t$, the substring $T[t\ ..\ (t+m'-1)]$ can be partitioned into $T_1 = T[t\ ..\ t_1]$, $T_2 = T[t_1+1\ ..\ t_2]$ and $T_3 = T[t_2+1\ ..\ (t+m'-1)]$, where length of $T_1$ and $T_3$ is at most $2h$, length of $T_3$ is $m'$ and $t_1$ and $t_2$ are multiplies of $h$. We then compute the distances between corresponding fragments of $T$ and $P$ as follows (where we consider corresponding partitioning of $P$ into $P_1$, $P_2$ and $P_3$): computing $\|T_1 - P_1\|^2$ and $\|T_3 - P_3\|^2$ takes $\bigo(h)$ each ($\bigo(nh)$ in total for all alignments), while $(1\pm\varepsilon)$ approximating $\|T_2 - P_2\|^2$ follows from pre-computed sketches.

We now discuss the general case when $\frac{m'}{d}$ is not a power of two. However we then observe that $m'$ can be represented as $m' = d (2^{i_1} + \ldots + 2^{i_s})$ where $s \le \log n$. And so the necessary computation require actually querying $s$ different sketches for fragments of length $d \cdot 2^{i_\ell}$. To avoid unnecessary $\bigo(\log n)$ overhead in time (and repeating running the preprocessing steps $\log n$ times for many various lengths of fragments) we observe that all the necessary sketches are already computed as temporary values in procedures \textsc{SingleSketch} and \textsc{AllSketch}.
\end{proof}

\section{Hamming and $\ell_1$ distances.}
We now briefly discuss how to use our framework for approximating other norms. We first recall the classical result by \cite{Karloff93}.
\begin{lemma}[\cite{Karloff93}]
\label{ref:hamembed}
Let $d = \bigo(\log n/\varepsilon^2)$ be large enough. Consider $\mu : \Sigma \to \{0,1\}^d$ where each $\varphi(c)$ is chosen uniformly and independently at random. Then 
$$\forall_{c_1 \not= c_2} \| \mu(c_1) - \mu(c_2) \|^2 = (1 \pm \varepsilon) \cdot \frac{d}{2}$$
with high probability.
\end{lemma}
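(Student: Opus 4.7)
The plan is to observe that the squared Euclidean distance $\|\mu(c_1)-\mu(c_2)\|^2$ between two independent uniform random vectors in $\{0,1\}^d$ is precisely the Hamming distance between them, which is a sum of $d$ i.i.d. indicator random variables. For a fixed pair $c_1 \neq c_2$ and each coordinate $k \in \{1,\ldots,d\}$, the bits $\mu(c_1)_k$ and $\mu(c_2)_k$ are independent uniform $\{0,1\}$, so the indicator $[\mu(c_1)_k \neq \mu(c_2)_k]$ is Bernoulli$(1/2)$. Linearity of expectation gives $\mathbb{E}\|\mu(c_1)-\mu(c_2)\|^2 = d/2$.

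Next I would apply a standard multiplicative Chernoff bound to this sum of $d$ independent Bernoulli$(1/2)$ variables, obtaining
\[
\Pr\!\left(\bigl|\|\mu(c_1)-\mu(c_2)\|^2 - d/2\bigr| > \varepsilon\cdot d/2\right) \le 2\exp(-\Omega(\varepsilon^2 d)).
\]
Choosing the hidden constant in $d = \bigo(\log n/\varepsilon^2)$ large enough, this probability is at most $n^{-C}$ for any desired constant $C$.

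Finally I would take a union bound over all pairs $(c_1,c_2)$ of distinct characters. Since only characters actually occurring in the text and pattern affect any distance computation, without loss of generality $|\Sigma| \le n + m = \bigo(n)$, so there are $\bigo(n^2)$ relevant pairs; absorbing this factor by enlarging $C$ by $2$ yields simultaneous $(1\pm\varepsilon)$ concentration for every pair with high probability. The only subtle point — and what I expect to be the main thing worth stating explicitly rather than an actual obstacle — is this restriction of the union bound to alphabet symbols appearing in the input; the bound does not hold uniformly over an arbitrary (potentially infinite) $\Sigma$, but this is harmless because $\mu$ only ever needs to be evaluated on the $\bigo(n)$ characters that actually appear.
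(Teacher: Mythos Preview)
Your argument is correct and is the standard proof of this classical lemma. Note, however, that the paper does not give its own proof of this statement: it is stated as a cited result from \cite{Karloff93} and used as a black box, so there is no paper-side argument to compare against. Your Chernoff-plus-union-bound derivation, together with the observation that only the $\bigo(n)$ characters actually occurring in the input need be handled, is exactly the intended reasoning behind the cited result.
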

We note that we assumed that the dimension $\bigo(\log n/\varepsilon^2)$ of map $\mu$ matches value of $d = \bigo(\log n/\varepsilon^2)$ from dimensionality-reductions in previous section. This can be easily ensured w.l.o.g.  as we can always either pad with extra zeroes each image of $\mu$ mapping, or add extra null coordinates to dimensionality reduction.
Extending the mapping from letters to words, that is for $w=c_1\ldots c_k \in \Sigma^*$ denote $\mu(w) = \mu(c_1) \ldots \mu(c_k)$, we have a corollary:
\begin{corollary}
\label{ref:embeddingham}
For $\mu$ as in Lemma~\ref{ref:hamembed}, and any two words $u,v \in \Sigma^n$, there is
$$\|\mu(u) - \mu(v)\|^2 = (1 \pm \varepsilon) \cdot \frac{d}{2} \|u - v\|_0$$
with high probability.
\end{corollary}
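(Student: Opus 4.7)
The plan is to reduce the word-level claim to a sum of per-letter claims by exploiting the block structure of $\mu$. By construction, $\mu(u) = \mu(u_1) \mu(u_2) \cdots \mu(u_n)$ is the concatenation of $n$ blocks of length $d$, and similarly for $\mu(v)$. Hence the difference decomposes blockwise and the squared $\ell_2$ norm splits additively:
$$\|\mu(u) - \mu(v)\|^2 = \sum_{i=1}^n \|\mu(u_i) - \mu(v_i)\|^2.$$

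Next I would split the sum according to whether $u_i = v_i$. For matching positions the block vanishes identically and contributes nothing. For mismatching positions, Lemma~\ref{ref:hamembed} guarantees $\|\mu(u_i) - \mu(v_i)\|^2 = (1\pm\varepsilon) \cdot d/2$, and the lemma's conclusion holds simultaneously over all distinct pairs of letters with high probability. Conditioning on that event, every surviving term lies in the interval $[(1-\varepsilon)d/2,\ (1+\varepsilon)d/2]$, and summing over the $\|u-v\|_0$ mismatching indices yields
$$\|\mu(u) - \mu(v)\|^2 \in \left[(1-\varepsilon)\cdot\tfrac{d}{2}\cdot\|u-v\|_0,\ (1+\varepsilon)\cdot\tfrac{d}{2}\cdot\|u-v\|_0\right],$$
which is precisely the statement of the corollary.

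The only subtlety is making sure the high-probability guarantee of Lemma~\ref{ref:hamembed} covers all the pairs of letters that actually appear in mismatching positions between $u$ and $v$. Since there are at most $n$ such positions (or, more crudely, at most $|\Sigma|^2$ distinct pairs for polynomial alphabet), a union bound only inflates the failure probability by a polynomial factor, which can be absorbed into the "with high probability" clause by taking the constant hidden in $d = \bigo(\log n/\varepsilon^2)$ slightly larger. I do not expect any real obstacle here: the corollary is essentially an additive lifting of the per-letter embedding, relying only on the block-concatenated structure of $\mu$ and the additivity of the squared $\ell_2$ norm under concatenation.
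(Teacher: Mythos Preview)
Your proposal is correct and matches the paper's (implicit) approach: the paper states the corollary immediately after introducing the letterwise extension of $\mu$ and gives no separate proof, so the intended argument is exactly the blockwise decomposition you wrote out. One minor remark: your closing paragraph about a union bound is unnecessary, since Lemma~\ref{ref:hamembed} already asserts the bound simultaneously for \emph{all} pairs $c_1\neq c_2$ with high probability, so no further union bound is needed at the corollary level.
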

This allows us to estimate Hamming distance between words from $\ell_2^2$ distance between the respective embeddings, which are of length $\bigo(\frac{n \log n }{\varepsilon^2})$.

\mainham*
\begin{proof}
By Corollary~\ref{ref:embeddingham} it is enough to estimate the $\ell_2^2$ text-to-pattern distance between embedded words $\mu(P)$ and $\mu(T)$ at starting positions $1,d+1,2d+1,\ldots$. We use procedure $\textsc{SingleSketch}$ to compute sketch of $\mu(P)$, and procedure $\textsc{AllSketch}$ to compute sketch of every $(dm)$-substring of $\varphi(T)$ starting at positions $1, d+1, 2d+1,\ldots$. Former takes $\bigo(\frac{n \log^2 n}{\varepsilon^2 \varepsilon'})$ time, and latter takes $\bigo(\frac{n \log^3 n}{\varepsilon^2 \varepsilon'})$ time, where we set $\varepsilon' = \Omega(\varepsilon/k)$ so that error from sketching accumulates to $1 \pm \bigo(\varepsilon)$ in total. All in all this gives $\bigo(\frac{n \log^4 n}{\varepsilon^3})$ time algorithm.
\end{proof}

We now proceed to $\ell_1$ distances. Our goal is to construct a mapping $f : [u] \to \{0,1\}^d$ that embeds $\ell_1$ into $\ell_2^2$ approximately. That is, we require $\forall_{a,b \in [u]} |a-b| \sim (1 \pm \varepsilon) \|f(a) - f(b)\|^2$ where $\sim$ hides constant factors. The existence of such map can be easily shown:
(i) Take exact map $f_1 : [u] \to \{0,1\}^u$ defined as $f_1(a) = 1^a0^{u-a}$, (ii) Take any $\ell_2$ dimensionality-reduction map $f_2 : \{0,1\}^u \to \{0,1\}^d$, (iii) set $f = f_2 \circ f_1$.
However, our goal is to compute such $f$ faster than in time proportional to universe size $u$. We do it by running first a preprocessing phase, and then a fast computation procedure.

\begin{algorithm}[h!!!]
\begin{algorithmic}[1]
\Procedure{Preprocess}{$u$}
\State Pick  $\log (u/d)$ fully independent maps $\varphi'_1,\ldots,\varphi'_{\log (u/d)}$ as in Corollary \ref{cor:dimred}.
\State $s_0 \gets (1,1,\ldots,1) \in \mathbb{R}^d$.
\For{$i \gets 1\ ..\ \log(u/d)$}
\State $s_i \gets \varphi'_i(s_{i-1},s_{i-1})$
\EndFor
\EndProcedure
\Procedure{Project}{$x \in [u]$, $c$}
\If{$c = 0$}
\State \Return $(\underbrace{1,1,\ldots,1}_{x},\underbrace{0,\ldots,0}_{ d-x})$
\ElsIf{$x < \frac{1}{2} d \cdot 2^c$}
\State \Return $\varphi'_c( \textsc{Project}(x, c-1), (0,\ldots,0) )$
\Else
\State \Return $\varphi'_c( s_{c-1}, \textsc{Project}(x-\frac{1}{2} d \cdot 2^c, c-1))$
\EndIf
\EndProcedure
\end{algorithmic}
\caption{}
\end{algorithm}
\begin{lemma}
\label{lem:l1dimred}
$\psi: x \to \textsc{Project}(x, \log(u/d))$ represents a linear map $[u] \to \mathbb{R}^d$ that embeds approximately $\ell_1$ to $\ell_2^2$, that is
$$|x - y| = (1 \pm \bigo(\varepsilon \log u)) \|\psi(x) - \psi(y)\|^2$$
with high probability. Moreover, $\psi$ takes $\bigo(\frac{\log^2 n \log u}{\varepsilon^3})$ time to evaluate.
\end{lemma}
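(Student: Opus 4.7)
The plan is to factor $\psi$ through an exact $\ell_1$-preserving unary encoding and then recognise the remainder as precisely the tree-structured $\ell_2$ dimensionality reduction from \textsc{SingleSketch}, so that the analysis of Theorem~\ref{th:dimred} transfers almost verbatim. Concretely, I would introduce the reference map $\tilde\psi : [u] \to \mathbb{R}^u$ that sends $x$ to its unary encoding $(1^x, 0^{u-x})$, which enjoys the exact identity $\|\tilde\psi(x) - \tilde\psi(y)\|^2 = |x-y|$. For each level $c$, set $v_c(x) = (1^x, 0^{d\cdot 2^c - x}) \in \mathbb{R}^{d\cdot 2^c}$, and define the random linear map $\Phi_c : \mathbb{R}^{d\cdot 2^c} \to \mathbb{R}^d$ recursively by $\Phi_0 = \mathrm{id}$ and $\Phi_c(w) = \varphi'_c(\Phi_{c-1}(w_L), \Phi_{c-1}(w_R))$, where $w_L, w_R$ are the two halves of $w$. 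The target identity is $\psi = \Phi_{\log(u/d)} \circ \tilde\psi$.

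The core step will be an induction on $c$ proving simultaneously that $\textsc{Project}(x, c) = \Phi_c(v_c(x))$ and that the precomputed $s_c$ equals $\Phi_c(\mathbf{1})$. The step splits on whether $x < d \cdot 2^{c-1}$: in the first case the left half of $v_c(x)$ is exactly $v_{c-1}(x)$ while the right half is the zero vector, matching the first branch of \textsc{Project} via $\Phi_{c-1}(0) = 0$ and linearity of $\varphi'_c$; in the second case the left half is all-ones (yielding $s_{c-1}$ by the parallel inductive claim) while the right half is $v_{c-1}(x - d \cdot 2^{c-1})$, matching the second branch. This identifies $\psi$ as the composition of the (non-linear) unary encoding with a \emph{single} linear map $\Phi$, so $\psi(x) - \psi(y) = \Phi\bigl(\tilde\psi(x) - \tilde\psi(y)\bigr)$, which gives the claimed linearity.

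With the factorization in hand, I would invoke the analysis of Theorem~\ref{th:dimred} on $\Phi$: iterating Corollary~\ref{cor:dimred} level by level on the \emph{fixed} vector $w = \tilde\psi(x) - \tilde\psi(y)$, and using the standard observation that at level $i$ the inputs to $\varphi'_i$ depend only on $\varphi'_1, \ldots, \varphi'_{i-1}$, yields $\|\Phi(w)\|^2 = (1 \pm \varepsilon)^{\log(u/d)} \|w\|^2$ with high probability. Combining with $\|w\|^2 = |x-y|$ and $(1\pm\varepsilon)^{\log(u/d)} = 1 \pm \bigo(\varepsilon \log u)$ gives the approximation bound. The runtime is a routine accounting: \textsc{Project} recurses to depth $\log(u/d)$, each level performs one $\varphi'_c$ evaluation in $\bigo(d^2 \varepsilon) = \bigo(\log^2 n / \varepsilon^3)$ time by Corollary~\ref{cor:dimred}, and the base case costs $\bigo(d)$, summing to the stated $\bigo(\log^2 n \log u / \varepsilon^3)$. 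The main obstacle is the inductive factorization itself: although $\tilde\psi$ is not linear in $x$, the identity $\textsc{Project}(x,c) = \Phi_c(v_c(x))$ depends delicately on combining linearity of $\varphi'_c$ with the facts $\Phi_{c-1}(0) = 0$ and $\Phi_{c-1}(\mathbf{1}) = s_{c-1}$; once that bookkeeping is pinned down the rest reduces cleanly to the argument already carried out for \textsc{SingleSketch}.
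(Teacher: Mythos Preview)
Your proposal is correct and follows essentially the same approach as the paper: your map $\Phi_c$ is exactly the paper's ``unfolded'' map $\pi_c$, and the key inductive identity $\textsc{Project}(x,c) = \Phi_c(1^x 0^{d\cdot 2^c - x})$ together with $s_c = \Phi_c(\mathbf{1})$ is precisely what the paper asserts (albeit more tersely). Your explicit case split on $x < d\cdot 2^{c-1}$ and the observation that linearity holds only for differences $\psi(x)-\psi(y)$ are welcome clarifications the paper leaves implicit.
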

\begin{proof}
Let us define informally $\pi_i = \varphi'_i(\varphi'_{i-1}(\ldots, \ldots),\varphi'_{i-1}(\ldots, \ldots))$ to be unfolded version of  $\varphi'$, that is a linear map $\mathbb{R}^{d \cdot 2^i} \to \mathbb{R}^d$. Formally $\pi_0 = \textrm{id}$, and for $x = (x_1,\ldots,x_{d \cdot 2^i})$, defining 
$$\pi_i( (x_1,\ldots,x_{d \cdot 2^i}) ) = \varphi'_i( \pi_{i-1}(x_{\textrm{left}}), \pi_{i-1}(x_{\textrm{right}})),$$
where  $x_{\textrm{left}} = (x_1, \ldots, x_{d \cdot 2^{i-1}})$, $x_{\textrm{right}} =  (x_{d \cdot 2^{i-1}+1}, \ldots, x_{d \cdot 2^{i}})$.

We now observe that $s_i = \pi_i( (\underbrace{1,\ldots,1}_{2^i d}) )$ and then (by induction) 
$$\textsc{Project}(x,i) = \pi_i( (\underbrace{1,1,\ldots,1}_{x},\underbrace{0,\ldots,0}_{2^id-x}) ).$$
Inductively, each iteration $1,..,\log(u/d)$ results in extra multiplicative $(1\pm\varepsilon)$ distortion. 
Computation time is dominated by applications of $\varphi'_1,\ldots,\varphi'_{\log(u/d)}$, both in the preprocessing time and the evaluation time. Since each linear map $\varphi'_i$ is applied in time $\bigo(\frac{\log^2 n}{\varepsilon^3})$, the time complexity bound follows.

\end{proof}

\mainljeden*
\begin{proof}
We use Lemma~\ref{lem:l1dimred} to reduce the problem to estimating $\ell_2^2$ text-to-pattern distance between $\psi(P)$ and $\psi(T)$ at starting positions $1,d+1,2d+1,\ldots$. We use procedure $\textsc{SingleSketch}$ to compute sketch of $\mu(P)$, and procedure $\textsc{AllSketch}$ to compute sketch of every $(dm)$-substring of $\varphi(T)$ starting at selected positions. Denote by $\varepsilon' = \Omega(\varepsilon/k)$ the stretch constant in procedures $\textsc{SingleSketch}$ and $\textsc{AllSketch}$, and by $\varepsilon'' = \Omega(\varepsilon/\log u)$ the stretch constant in procedures \textsc{Project} and \textsc{Preprocess}. The total run-time of \textsc{AllSketch} is then $\bigo(\frac{n \log^3n}{\varepsilon^2 \varepsilon'}) = \bigo(\frac{n\log^4n}{\varepsilon^3})$ and total run-time of computing $\psi(T)$ and $\psi(P)$ is $\bigo(\frac{n \log^2 n \log u}{(\varepsilon'')^3}) = \bigo(\frac{n \log^2 n \log^4 u}{\varepsilon^3})$.
\end{proof}

\bibliographystyle{splncs04}
\bibliography{bib}

\end{document}